\newtheorem{Lemma}{Lemma}
\newtheorem{proposition}{Proposition}
\DeclareMathOperator*{\argmax}{arg\,max}
\newcommand{\ve}[1]{\boldsymbol{#1}}
\newcommand{\E}[1]{\mathbb{E}\left\{#1\right\}}
 \newcommand{\vg}{\ve{g}}
\newcommand{\vH}{\ve{H}} \newcommand{\vh}{\ve{h}}
\newcommand{\vW}{\ve{W}}
\newcommand{\qg}{{\bf g}}
\newcommand{\qh}{{\bf h}}
\newcommand{\qn}{{\bf n}}
\newcommand{\qw}{{\bf w}}
\newcommand{\qA}{{\bf A}}
\newcommand{\qH}{{\bf H}}
\newcommand{\qI}{{\bf I}}
\newcommand{\qW}{{\bf W}}
\newcommand{\SINRuA}{\mathsf{SINR_u^A}}
\newcommand{\SINRdA}{\mathsf{SINR_d^A}}
\newcommand{\SINRuS}{\mathsf{SINR_u^S}}
\newcommand{\SINRdS}{\mathsf{SINR_d^S}}
\newcommand{\SNRd}{\mathsf{SNR_d}}
\newcommand{\SNRu}{\mathsf{SNR_u}}
\newcommand{\HudS}{\qH_{\mathsf{ud}}^{pq}}
\newcommand{\HudSdag}{\qH_{\mathsf{ud}}^{pq\dag}}
\newcommand{\Us}{\mathsf{u}}
\newcommand{\Ds}{\mathsf{d}}
\newcommand{\AP}{\mathsf{a}}
\newcommand{\Sap}{\sigma_{\AP\AP}^2}
\newcommand{\PD}{p_\mathsf{D}}
\newcommand{\MRC}{\mathsf{MRC}}
\newcommand{\MRT}{\mathsf{MRT}}
\newcommand{\ZF}{\mathsf{ZF}}
\newcommand{\FD}{\mathsf{FD}}
\newcommand{\RFD}{\mathcal{R}^\FD_{\mathsf{sum}}}
\newcommand{\RHDs}{R^\mathsf{{HD}}_{\mathsf{{sum}}}}
\newcommand{\Galf}{\mathcal{G}(M,\phi,\delta,\PD\lambda)}
\newcommand{\be}{\begin{equation}} \newcommand{\ee}{\end{equation}}
\newcommand{\bea}{\begin{eqnarray}} \newcommand{\eea}{\end{eqnarray}}
\newcites{Prim}{Very important papers}
\definecolor{light-gray}{gray}{0.65}
\newcounter{mytempeqcounter}
\begin{document}

\title{\fontsize{0.84cm}{1cm}\selectfont  Uplink and Downlink Rate Analysis of a Full-Duplex C-RAN with Radio Remote Head Association}

\author{Mohammadali Mohammadi$^*$, Himal A. Suraweera$^\dagger$, and Chintha Tellambura$^\ddagger$\\
\small{$^*$Faculty of  Engineering, Shahrekord University, Iran (e-mail: m.a.mohammadi@eng.sku.ac.ir)}\normalsize\\
\small{$^\dagger$Department of Electrical and Electronic Engineering, University of Peradeniya, Sri Lanka (e-mail: himal@ee.pdn.ac.lk)}\normalsize\\
\small{$^\ddagger$Department of Electrical and Computer Engineering, University of Alberta, Canada (e-mail: chintha@ece.ualberta.ca)}\normalsize}

\maketitle
\begin{abstract}
We characterize the uplink (UL) and downlink (DL) rates of a full-duplex cloud radio access network (C-RAN) with all participate and single best remote radio head (RRH) association schemes. Specifically, multi-antenna equipped RRHs distributed according to a Poisson point process is assumed. The UL and DL sum rate of the single best RRH association scheme is maximized using receive and transmit beamformer designs at the UL and DL RRHs, respectively. In the case of the single best strategy, we study both optimum and sub-optimum schemes based on maximum ratio combining/maximal ratio transmission (MRC/MRT) and zero-forcing/MRT (ZF/MRT) processing. Numerical results show that significant performance improvements can be achieved by using the full-duplex mode as compared to the half-duplex mode. Moreover, the choice of the beamforming design and the RRH association scheme have a major influence on the achievable full-duplex gains.
\end{abstract}

\section{Introduction}
\label{sec:intro}

Cloud radio access network (C-RAN) is a new conceptual framework for  implementing future wireless networks~\cite{Peng:WCL:2014,VPoor:SPL:2013,Ratnarajah:TSP:2015}. In this centralized architecture, distributed access points known as remote radio heads (RRHs) forward user signals to/from a baseband unit (BBU) via a high speed optical fronthaul link. Consequently, C-RAN solutions can overcome path loss effect and deploy centralized signal processing to manage   interference effectively~\cite{Liu:TSP:2014}.

On parallel, full-duplex communication has emerged as  a complementary approach  for 5G wireless  since it has the potential to double the spectral efficiency of 5G  wireless. In essence, full-duplex radio performs simultaneous transmit/receive operations at the same frequency. Research on full-duplex has progressed  rapidly on  a variety of aspects such as theory, design and hardware implementation with the promise of making it  a viable practical solution soon~\cite{Duarte:PhD:dis, Riihonen:JSP:2011}. To this end, a major performance-limiting factor   is the loopback interference (LI) experienced at the input of a full-duplex transceiver~\cite{Himal:WCOM:FD:2014,Mohammad:TCOM:2015}. In order to mitigate LI, antenna domain techniques such as the use of electromagnetic shields, directional antennas and antenna
separation schemes can be employed~\cite{Everett:TWC:2014}.  When full-duplex and C-RAN are combined, due to the distributed RRHs, path loss serves a simple effective phenomenon for LI mitigation.

In the current literature, stochastic-geometry tools  have been widely adopted to study the C-RAN performance. Assuming a Poisson point process (PPP) distributed RRHs, the ergodic capacities of two (single-nearest and $N$-nearest) RRH association schemes were characterized in~\cite{Peng:WCL:2014}. By considering beamforming and base station selection, in~\cite{VPoor:SPL:2013} the performance of distributed antenna arrays was characterized.  In \cite{Ratnarajah:TSP:2015}, the DL performance of a multiple antenna equipped C-RAN with maximal ratio transmission (MRT) was analyzed. In~\cite{Liu:TSP:2014}, a  C-RAN  was optimized via  DL antenna selection and regularized zero forcing (ZF).  Deviating from the existing body of work that has only focused on uplink (UL) or DL performance, \cite{Leung:TVT} considered a full-duplex distributed antenna relay implementation. However, it assumes perfect LI cancellation. Thus, many theoretical questions remain open.

In this paper, we consider a C-RAN architecture in which multiple antenna equipped RRHs communicate with a full-duplex user to support simultaneous UL and DL transmission. Our contributions are summarized as follows:

\begin{itemize}
\item Assuming optimum,  maximum ratio combining (MRC)/MRT, and ZF/MRT beamforming designs, we derive exact and tractable expressions for the average UL and DL rate of the full-duplex user for the single best UL/DL RRH association (SRA) scheme.
\item Our results reveal that for a fixed value of LI power, the optimum and ZF/MRT schemes can ensure a balance between maximizing the system average sum rate and maintaining acceptable level of fairness between the UL/DL transmission. Moreover, the performance of MRC/MRT can be substantially improved by adopting an appropriate DL and UL association scheme.
\end{itemize}
\emph{Notation:} We use lower case/upper case bold letters to denote vectors and matrices, respectively. $\|\cdot\|$, $(\cdot)^{\dag}$, $(\cdot)^{-1}$ and ${\mathsf{trace}}(\cdot)$ denote the Euclidean norm, conjugate transpose operator, matrix inverse and trace of a matrix respectively; ${\tt E}\left\{x\right\}$ stands for the expectation of random variable (RV) $x$; $F_X(\cdot)$ denote the associated  cumulative distribution function (cdf)  and $\mathcal{M}_X(s)$, the moment generating function (MGF).  $\Gamma(a)$ is the Gamma function; $\Gamma(a,x)$ is upper incomplete Gamma function~\cite[Eq. (8.310.2)]{Integral:Series:Ryzhik:1992}; and $G_{p, q}^{m, n} \left( z \  \vert \  {a_1\cdots a_p \atop b_1\cdots b_q} \right)$ denotes the Meijer G-function~\cite[ Eq. (9.301)]{Integral:Series:Ryzhik:1992}.

\section{System Model}\label{sec:SYS Model}
Consider a C-RAN consisting of a  BBU and  a group of spatially distributed RRHs each having $M \geq 1$  antennas jointly support UL and DL transmissions for a full-duplex user, denoted by $U$. We assume that the full-duplex user is equipped with two antennas: one receive antenna and one transmit antenna. The locations of the RRHs are modeled as a homogeneous PPP $\Phi=\{x_k\}$ with density $\lambda$ in a disc $\mathcal{D}$, of radius $R$. We assume that $100\PD\%$ of the RRHs, are deployed to assist the DL communication and $100(1-\PD)\%$ for UL communication. Therefore, the set of DL RRHs is denoted as $\Phi_{\Ds} =\{x_k\in \Phi: B_k(\PD)=1\}$ where $B_k(\PD)$  are independent and identically distributed (i.i.d.) Bernoulli RVs with parameter $p$ associated with $x_k$. Similarly, the set of UL RRHs is a PPP with density $(1-\PD)\lambda$ and is denoted as $\Phi_{\Us} =\{x_k\in \Phi: B_k(\PD)=0\}$.

\subsection{Channel Model}
The channel model consist of both small-scale multipath fading and large-scale path loss. We denote the DL channel vector from RRH $i$ to $U$ as $\qh_i\in \mathbb{C}^{M \times 1}$ and the UL channel vector from $U$ to RRH $i$ as $\qg _i^{\dag}\in \mathbb{C}^{1 \times M}$, respectively.  These channels capture the small-scale fading and are modeled as Rayleigh fading such that $\qg _i$  and $\qh_i \sim  \mathcal{CN}(\textbf{0}_M, \qI_M )$,  where $\mathcal{CN}(\cdot,\cdot )$,  denotes a circularly symmetric complex Gaussian distribution. The path loss function is given by $\ell(x_1,x_2)=\|x_1-x_2\|^{-\mu}$, with $\mu>2$ is the path loss exponent. Further, as in \cite{Ratnarajah:TSP:2015} we assume that there exist an ideal low-latency backhaul network with sufficiently large capacity (e.g. optical fiber) connecting the set of RRHs to the BBU, which performs all the baseband signal processing and transmission scheduling for all RRHs.

\subsection{Association Schemes}
For this  system, we investigate the performance of the following two RRH association schemes:
\begin{itemize}
\item \emph{All RRH Association (ARA) Scheme}: All corresponding DL RRHs cooperatively transmit the signal, $s_{\Ds}$ to the full-duplex user, $U$. Moreover, all the corresponding UL RRHs deliver signals from $U$ to the BBU.
\item \emph{SRA Scheme}:  UL RRH and DL RRH with the best channels from/to $U$ is selected in order to receive and transmit UL/DL signals. We also model that an interference region (IR) is adopted by the BBU to protect the UL RRH against interference from the DL RRH. No DL RRH transmission is allowed within the IR and $U$ associates with the DL RRH having the best channel strength within the selection region $\mathcal{A}$.  Without loss of generality, we assume that $U$ is located at the origin of $\mathcal{D}$~\cite{Peng:WCL:2014,Ratnarajah:TSP:2015}. Moreover, let us denote $\qw_{t,i}\in\mathbb{C}^{M\times 1}$ as the transmit beamforming vector at the DL RRH $i$ and $\qw_{r,i}\in\mathbb{C}^{M\times 1}$ as the receive beamforming vector at the UL RRH, $i$. Therefore, the associated UL RRH $p$ and DL RRH $q$ for user $U$ are given by
\begin{subequations}
\begin{align}
p&=\argmax_{i\in \Phi_{\mathsf{u}}}\{\ell(x_{i})\|\qw_{r,i}^\dag\qg_i\|^2\} \\
q&=\argmax_{i\in \Phi_{\mathsf{d}}\bigcap \mathcal{A}}\{\ell(x_{i})\|\qh_i^{\dag}\qw_{t,i}\|^2\}.
\end{align}
\end{subequations}
\end{itemize}

In this paper we consider a sectorized IR of angle $\pm\phi$ around the $U-p$ axis. As shown in Section~\ref{sec:numerical} the UL/DL sum rate performance will be dependent on $\phi$.
\subsection{Uplink/Downlink Transmission}
\emph{DL Transmission:}
Similar to \cite{Ratnarajah:TSP:2015}, we assume that all DL RRHs transmit with power $P_b$. Hence, according to the ARA scheme, the received signal at $U$ can be expressed as
\be\label{eq:rx at user}
y_\Ds  = \sum_{i \in \Phi_\Ds} \sqrt{P_b \ell(x_{i})}\qh_{i}^{\dag}\qw_{t,i} s_{\Ds} \!+\! \sqrt{P_u}h_{\mathsf{LI}}s_u + n_\Ds,
\ee
where $P_u$ is the user transmit power, $s_u$  is the user signal satisfying ${\tt E}\left\{s_u s_u^{\dag}\right\}=1$, and $n_\Ds$ denotes the additive white Gaussian noise (AWGN). We proceed with all noise variances set to one.
$h_{\mathsf{LI}}$ denotes the LI channel at the user.  In order to mitigate the adverse effects of the LI on system's performance, an interference cancellation scheme (i.e. analog/digital cancellation) can be used at the full-duplex user and we model the residual LI channel with Rayleigh fading assumption since the strong line-of-sight component can be estimated and removed~\cite{Duarte:PhD:dis}. Since each implementation of a particular analog/digital LI cancellation scheme can be characterized by a specific residual power, a parametrization by $h_{\mathsf{LI}}$ satisfying $\E{|h_{\mathsf{LI}}|^2}=\Sap$  allows these effects to be studied in a generic way~\cite{Riihonen:JSP:2011}.

By invoking~\eqref{eq:rx at user}, the DL signal-to-interference-plus-noise ratio (SINR) at the $U$ with ARA and SRA schemes are respectively expressed as
\begin{subequations}
\begin{align}
 \SINRdA &= \frac{\sum_{i \in \Phi_\Ds} P_b \ell(x_{i})\|\qh_{i}^{\dag}\qw_{t,i}\|^2  }{{P_u}|h_{\mathsf{LI}}|^2 + 1}\label{eq:SINRd AR},\\
  \SINRdS&= \frac{P_b \ell(x_{q})\|\qh_{q}^{\dag}\qw_{t,q}\|^2  }{{P_u}|h_{\mathsf{LI}}|^2+ 1}.\label{eq:SINRd SR}
\end{align}
\end{subequations}
\emph{UL Transmission:} In the considered full-duplex C-RAN, UL transmission is impaired by the inter-RRH interference due to DL RRH transmission. Therefore, in case of the ARA scheme, received signal at the BBU is given by
\begin{align}\label{eq:rx at user}
 y_u&=\sum_{j\in \Phi_{\mathsf{u}} } \Big(\sqrt{P_u\ell(x_{j})}\qw_{r,j}^{\dag}\qg_{j} s_u
 \\&\quad
 +\sum_{i\in \Phi_{\mathsf{d}} \bigcap \mathcal{A} } \sqrt{P_b \ell(x_j,x_{i})}\qw_{r,j}^{\dag}\qH_{\mathsf{ud}}^{ji}\qw_{t,i}s_{\Ds}+ \qw_{r,j}^{\dag}\qn_j\Big),\nonumber
\end{align}
where $\qH_{\mathsf{ud}}^{ji}\in \mathbb{C}^{M \times M}$  is the channel matrix between the DL RRH $i$ and UL RRH $j$  consists of complex Gaussian distributed entries with zero mean and unit variance, $\qn_j\sim  \mathcal{CN}(\textbf{0}_M, \qI_M )$  denotes the AWGN vector at the  UL RRH $j$. According to the SRA scheme only one UL (best) RRH and one DL (best) RRH are selected to assist the full-duplex user. Let the sub-indexes $p$ and $q$ correspond to the active UL and DL RRH, respectively. Therefore, the SINR with ARA and SRA schemes can be respectively expressed as
\vspace{-0.2em}
\begin{subequations}
\begin{align}
 \SINRuA&=
 \frac{\sum_{j\in \Phi_{\mathsf{u}}} P_u \ell(x_{j})\|\qw_{r,j}^{\dag}\qg_{j}\|^2}
                       { I_{\mathsf{ud}}+ \|\qw_{r,j}\|^2},\label{eq:SINRu AR}\\
 \SINRuS &=  \frac{P_u \ell(x_{p})\|\qw_{r,p}^{\dag}\qg_{p}\|^2  }
{ P_b \ell(x_p,x_{q})\|\qw_{r,p}^{\dag}\qH_{\mathsf{ud}}^{pq}\qw_{t,q}\|^2+ \|\qw_{r,p}\|^2},\label{eq:SINRu SR}
\end{align}
\end{subequations}
where $I_{\mathsf{ud}}=\sum_{j\in \Phi_{\mathsf{u}}}\:\sum_{i\in \Phi_{\mathsf{d} \bigcap \mathcal{A}}}\! P_b \ell(x_j,x_{i})|\qw_{r,j}^{\dag}\qH_{\mathsf{ud}}^{ji}\qw_{t,i}|^2.$

In the next section, we consider different transmit and receive beamforming vector designs and characterize the system performance in terms of the average UL and DL sum rate given by
\vspace{-0.5em}
\begin{align}
\RFD  = \mathcal{R}_\Us + \mathcal{R}_\Ds,\label{eq:achievable rate FD}
\end{align}
where \small{$\mathcal{R}_\Us =\E{\ln\left(1+\mathsf{SINR_u^i}\right)}$, $\mathcal{R}_\Ds =\E{ \ln \left(1+\mathsf{SINR_d^i}\right)}$ }\normalsize with $\mathsf{i}\in\{\mathsf{A}, \mathsf{S}\}$ are the spatial average UL and DL rates, respectively.
\section{Joint Precoding/Decoding Designs}
We now consider several transmit/receive beamformer designs to suppress/cancel the inter-RRH interferences.  Specifically, we present the optimal design that maximizes the achievable sum rate of the SRA scheme. We further investigate the ZF/MRT and MRC/MRT suboptimal beamforming designs, where the former is applicable for the SRA scheme. Each of these designs offers a different performance-complexity tradeoff.
\subsection{The Optimal Processing}
In this subsection, our main objective is to jointly design the transmit and receive beamformers at the selected DL and UL RRH pair so that system achievable sum rate in~\eqref{eq:achievable rate FD} is maximized. Specifically, the sum rate  maximization problem can be formulated as
 \begin{align}\label{eqn:opt problem}
    \max_{\qw_{t,q}, \qw_{r,p}} \hspace{1em}&
\RFD=\ln\left(1 +
    a_1 \|\qh_{q}^{\dag}\qw_{t,q}\|^2
    \right) \nonumber\\
&\hspace{1em}+
\ln\left(1 +
\frac{ a_2\|\qw_{r,p}^{\dag}\qg_{p}\|^2  }
 { a_3\|\qw_{r,p}^{\dag}\qH_{\mathsf{ud}}^{pq}\qw_{t,q}\|^2+ \|\qw_{r,p}\|^2}
 \right),\nonumber\\
    \mbox{s.t.} \hspace{1em}& \|\qw_{r,p}\|=\|\qw_{t,q}\|=1,
 \end{align}
where $a_1 = \frac{P_b \ell(x_{q})}{{P_u}|h_{\mathsf{LI}}|^2+ 1} $, $a_2 =P_u \ell(x_{p})$, and $a_3 = P_b \ell(x_p,x_{q})$. In order to solve the problem in~\eqref{eqn:opt problem}, we first fix $\qw_{t,q}$ and optimize $\qw_{r,p}$ to maximize $\RFD$. Note that given $\qw_{t,q}$, $\qw_{r,p}$ only influence the achievable UL rate.  Therefore, using the fact that logarithm is a monotonically  increasing function, the optimization problem can be written as
\begin{align}\label{eq:opt: UL problem}
\underset{\|\qw_{r,p}\|^2 =1}
{\text{max}}\hspace{1em}&\frac{ a_2\|\qw_{r,p}^{\dag}\qg_{p}\|^2  }
 { a_3\|\qw_{r,p}^{\dag}\qH_{\mathsf{ud}}^{pq}\qw_{t,q}\|^2+ \|\qw_{r,p}\|^2},
\end{align}
which is a generalized Rayleigh ratio problem. It is well known that~\eqref{eq:opt: UL problem} is globally maximized
when
\begin{align}\label{eq:opt:w_r}
\qw_{r,p} = \frac{\left(a_3\qH_{\mathsf{ud}}^{pq}\qw_{t,q}\qw_{t,q}^{\dag}\qH_{\mathsf{ud}}^{pq\dag} + \qI\right)^{-1}\qg_{p}}
{\Big\| \left(a_3\HudS\qw_{t,q}\qw_{t,q}^{\dag}\HudSdag + \qI\right)^{-1}\qg_{p} \Big\|}.
\end{align}
Accordingly, by substituting $\qw_{r,p}$ into~\eqref{eq:opt: UL problem} and applying the Sherman Morrison formula, the optimization problem in~\eqref{eqn:opt problem} can be re-formulated as
 \begin{align}\label{eqn:opt problem reform}
    &\max_{\|\qw_{t,q}\|^2=1} \hspace{1em}
\RFD=\ln\left(1 +
    a_1 \|\qh_{q}^{\dag}\qw_{t,q}\|^2
    \right) \\
&\hspace{0em}+
\ln\left(1 +
{a_2}\left(\|\vg_{p}\|^2 - \frac{a_3 \|\vg_{p}^{\dag}\HudS \qw_{t,q}\|^2 }
{1 + a_3 \qw_{t,q}^{\dag}\HudSdag\HudS \qw_{t,q} }\right)
 \right),\nonumber
 \end{align}
 which is still difficult to solve due to its nonconvex nature. To solve the problem in~\eqref{eqn:opt problem reform}, we apply a similar approach as in~\cite{Mohammad:TCOM:2015} to convert the optimization problem to
\begin{align}\label{eqn:opt problem SDP}
\underset{ \qw_{t,q}}{\text{max}}&
\quad{\mathsf{trace}}(\vh_{q}^{\dag}{\vW}_{t}\vh_{q})\nonumber\\
\text{s.t.}&
\quad{\mathsf{trace}} ({\vW}_{t}(\HudSdag \vg_{p}\vg_{p}^{\dag}\HudS - \mu\:\HudSdag\HudS)) = \frac{\alpha}{a_3},\nonumber\\
&\quad
{\vW}_{t} \succeq 0, \quad{\mathsf{trace}}({\vW}_{t}) =1, \quad\text{rank}({\vW}_{t})=1,
\end{align}
where $\alpha\!=\!\frac{ a_3\|\vg_{p}^{\dag}\HudS \qw_{t,q}\|^2} {1 + a_3\qw_{t,q}^{\dag}\HudSdag\HudS \qw_{t,q}}$ and ${\vW}_{t} = \qw_{t,q}\qw_{t,q}^{\dag}$ is a  symmetric, positive semi-definite matrix. In order to solve~\eqref{eqn:opt problem SDP}, we can resort to the widely used semidefinite relaxation approach. By dropping the rank-1 constraint, the resulting problem becomes a semidefinite program, whose solution ${\qW}_{t}$ can be found by using the method provided in~\cite[Appendix B]{Zheng:JSPL2013} or by using appropriate solvers, for example, CVX.

Denoting the optimal objective value of~\eqref{eqn:opt problem SDP} as $f(\alpha)$, the achievable sum rate maximization
problem can be formulated as
\begin{align}\label{eqn:opt problem 1-D}
\underset{\alpha \geq 0}{\text{max}}&
\quad \RFD (\alpha) =
\log_2\left(\left(1\! +\! a_1 f(\alpha)\right) \left(1 + a_2\left(\|\vg_{p}\|^2 \!- \!\alpha\right)\right)\right).
\end{align}
Therefore, in order to solve~\eqref{eqn:opt problem}, it remains to perform a one-dimensional optimization with respect to $\alpha$.

\subsection{ZF/MRT Processing}
As a suboptimal design, we can adopt ZF at the UL RRH to completely cancel inter-RRH interference with SRA scheme~\cite{Himal:WCOM:FD:2014}. To ensure feasibility, the number of antennas equipped at the UL RRH should be greater than one, i.e., $M>1$. After substituting $\qw_{t,q}^{\MRT} = \frac{\qh_q}{\|\qh_q\|}$  into~\eqref{eq:SINRu AR}, the optimal receive beamforming vector at the UL RRH $\qw_{r,p}$ can be obtained by solving the following problem:
   \bea\label{eqn:wt}
    \max_{\|\qw_{r,p}\|=1} &&\hspace{1em}  \|\qw_{r,p} \qg_{p}\|^2 \nonumber\\
     \mbox{s.t.} &&\hspace{1em} \qw_{r,p}^{\dag}\qH_{\mathsf{ud}}^{pq}\qh_q =0.
 \eea
Hence, the optimal combining vector $\qw_{r,p}$  can be obtained as
\bea\label{eqn:wtZF}
\qw_{r,p}^{\ZF} = \frac{\qA \qg_{p}}{\|\qA \qg_{p}\|},
 \eea
where $\qA\triangleq \qI - \frac{\qH_{\mathsf{ud}}^{pq}\qh_q \qh_q^{\dag}\qH_{\mathsf{ud}}^{pq \dag} }{ \| \qH_{\mathsf{ud}}^{pq}\qh_q\|^2}$.

\subsection{MRC/MRT Processing}
In addition, we also consider a MRC/MRT suboptimal beamforming design. Although MRC/MRT processing is not optimal in presence of inter-RRH interference, it could be favored in practice, because it can balance the performance and system complexity.
For the MRC/MRT scheme, $\qw_{r,p}$ and $\qw_{t,q}$ are set to match the UL and DL channels, respectively. Hence, $\qw_{r,p}^{\MRC}=\frac{\qg_p}{\|\qg_p\|}$ and $\qw_{t,q}^{\MRT}=\frac{\qh_q}{\|\qh_q\|}$.

\section{Performance Analysis}
In this section, average UL/DL rates of the considered RRH association schemes together with ZF/MRT and MRC/MRT processing are evaluated. In case of the optimal scheme, derivation of the UL/DL rates are difficult and we use simulations in Section~\ref{sec:numerical}. Moreover, deriving the statistics of the SINR in \eqref{eq:SINRu AR} for the ARA scheme with MRC/MRT and ZF/MRT processing seems to be an intractable task. Hence, in order to evaluate the average UL rate, we have resorted to simulations in Section~\ref{sec:numerical}.

\subsection{ZF/MRT Scheme}
By substituting $\qw_{r,p}^{\ZF}$ and $\qw_{t,q}^{\MRT}$ into~\eqref{eq:SINRd SR} and~\eqref{eq:SINRu SR} the received SINR at the user and the BBU are obtained as $\SINRdS= \frac{P_b \ell(x_{q})\|\qh_{q}\|^2}{{P_u}|h_{\mathsf{LI}}|^2+ 1}$ and $ \SINRuS =  P_u \ell(x_{p})\|\tilde{\qg}_{p}\|^2$, respectively, where $\tilde{\qg}_{p}$ is a $(M-1)\times 1$ vector~\cite{Mohammad:TCOM:2015}. Note that according to the choice of the IR parameter $\phi$, there is a probability $p_\emptyset$ that the DL RRH set (and thus interfering set) is empty. In this case ZF beamformer at the UL RRH reduces to MRC beamformer and thus $ \SINRuS =  P_u \ell(x_{p})\|{\qg}_{p}\|^2$.

As a preliminary, we first present the cdf of the RV $X = \max_{k\in\Phi_{\Ds} \bigcap \mathcal{A}}\{\ell(x_{k})\|\qh_{k}\|^2\}$, which will be invoked in the subsequent derivations. For notation convenience, we define $\delta=\frac{2}{\mu}$.
\begin{Lemma}\label{lemma:cdf X}
Let $\delta= \frac{m}{n}$ with $\gcd(m,n)=1$ where $\gcd(m,n)$ is the greatest common divisor of integers $m$ and $n$. The exact cdf of $X$ is given by
\begin{align}\label{eq:cdf of X}
F_{X}(t) =\exp\left(-\Galf t^{-\delta}\right),
\end{align}
where $\Galf = \frac{\PD\lambda(\pi-\phi)}{\Gamma(M)}\Gamma\left(M+\delta\right)$.
\end{Lemma}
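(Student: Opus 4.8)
The plan is to obtain $F_X(t)=\Prs{X\le t}$ as a \emph{void probability} of a thinned Poisson point process. Since $X=\max_{k\in\Phd\cap\mathcal{A}}\ell(x_k)\|\qh_k\|^2$, the event $\{X\le t\}$ is exactly the event that \emph{no} point of $\Phd\cap\mathcal{A}$ satisfies $\ell(x_k)\|\qh_k\|^2>t$. I would treat $\{(x_k,\|\qh_k\|^2)\}$ as an independently marked PPP, the marks $\|\qh_k\|^2$ being i.i.d.\ and independent of the locations, and invoke the marking theorem: the points with $\ell(x_k)\|\qh_k\|^2>t$ form a PPP on $\mathcal{A}$ whose intensity measure has total mass
\[
\Lambda(t)=\PD\lambda\int_{\mathcal{A}}\Prs{\ell(x)\,G>t}\,dx,
\]
where $G\sim\mathrm{Gamma}(M,1)$ is the common law of $\|\qh_k\|^2$ for $\qh_k\sim\mathcal{CN}(\mathbf{0}_M,\qI_M)$. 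The void probability then gives $F_X(t)=\exp(-\Lambda(t))$, so the whole task reduces to evaluating $\Lambda(t)$ in closed form.

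Next I would evaluate $\Lambda(t)$ in polar coordinates centred at $U$ (the origin), writing $dx=r\,dr\,d\theta$ and $\ell(x)=r^{-\mu}$. Because the interference region is the sector of half-angle $\phi$ about the $U$--$p$ axis and $\mathcal{A}$ is its complement inside the disc, the angular integration contributes a factor $2(\pi-\phi)$ and decouples from the radial part, leaving
\[
\Lambda(t)=2(\pi-\phi)\,\PD\lambda\int_0^{R}\Prs{G>t\,r^{\mu}}\,r\,dr.
\]
I would then take $R\to\infty$ (the large-network regime), substitute $u=t\,r^{\mu}$ so that $r\,dr=\tfrac{\delta}{2}\,t^{-\delta}u^{\delta-1}\,du$ with $\delta=2/\mu$, and reduce the radial integral to $\tfrac{\delta}{2}t^{-\delta}\int_0^\infty u^{\delta-1}\Prs{G>u}\,du$.

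The key computational step is the tail-moment identity $\int_0^\infty u^{\delta-1}\Prs{G>u}\,du=\tfrac{1}{\delta}\,\E{G^{\delta}}$, obtained by Fubini (integrating $u$ up to $g$ against the density of $G$). For the Gamma law this gives $\E{G^{\delta}}=\Gamma(M+\delta)/\Gamma(M)$, so the radial integral collapses to $\tfrac12 t^{-\delta}\,\Gamma(M+\delta)/\Gamma(M)$. Combining with the angular factor yields $\Lambda(t)=\tfrac{\PD\lambda(\pi-\phi)}{\Gamma(M)}\Gamma(M+\delta)\,t^{-\delta}=\Galf\,t^{-\delta}$, and hence $F_X(t)=\exp(-\Galf\,t^{-\delta})$, as claimed.

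The main obstacle I anticipate is not any single integral but the bookkeeping at the setup stage: correctly identifying the intensity of the thinned process via the marking theorem and cleanly separating the angular measure $2(\pi-\phi)$ of $\mathcal{A}$ from the radial part. Note that the rational representation $\delta=m/n$ plays no role in this particular derivation and is needed only in the later rate computations; for the cdf the only delicate point is justifying the $R\to\infty$ passage, which is harmless since the integrand carries the exponentially decaying Gamma tail $\Prs{G>tr^\mu}$ and is therefore integrable over $(0,\infty)$.
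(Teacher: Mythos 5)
The paper omits its proof of this lemma, but your derivation is precisely the computation its statement encodes — the void probability of the independently marked, thinned PPP over the sector complement, the angular factor $2(\pi-\phi)$ decoupling in polar coordinates, and the fractional-moment identity $\E{G^{\delta}}=\Gamma(M+\delta)/\Gamma(M)$ reproducing the constant $\Galf$ exactly — and every step checks out, including your correct observation that the rational form $\delta=m/n$ is irrelevant here. One refinement to your closing remark: integrability of the Gamma tail justifies that the $R\to\infty$ limit exists, but the finite-disc cdf genuinely differs from it (there is an atom at zero of mass $e^{-\PD\lambda(\pi-\phi)R^2}$, which the paper accounts for separately in Proposition~\ref{Propos:Rc:ZF}), so the lemma's ``exact'' claim should be read as exact for the unbounded network rather than as a harmless passage to the limit.
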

\begin{proof}
The proof is omitted due to space limitations.
\end{proof}

We now present average sum rate with ZF/MRT processing in the following proposition.

\begin{proposition}\label{Propos:Rc:ZF}
The average sum rate achieved by the SRA scheme with ZF/MRT processing is expressed by~\eqref{eq:achievable rate FD}
where
 \begin{align}\label{eq:Ru ZF/MRT}
 \mathcal{R}_\Us &= e^{-\lambda\PD (\pi\!-\phi) R^2} \mathcal{R}_{M} + (1\!-\!e^{-\lambda\PD (\pi-\phi) R^2}) \mathcal{R}_{(M-1)},
\end{align}
where
 \begin{align}\label{eq:Ru ZF}
 \mathcal{R}_{M}\!=\!\eta_\Us^\ZF
 G_{v, 2m}^{2m,  t} \left( \varsigma_\Us^\ZF  \Big\vert \  {\Delta(n,0),\Delta(m,0),\Delta(m,1) \atop \Delta(m,0), \Delta(m,0) }\right)\!\!,
\end{align}
with $\eta_\Us^\ZF\!\!=\!\!\sqrt{\frac{n}{(2\pi)^{v-3}}}$, $\varsigma_\Us^\ZF\!\!=\!\!\left(\frac{n}{\mathcal{G}(M,0,\delta,(1-\PD)\lambda)P_u^{\delta}}\right)^n $, $t\!=\!m+n$, $v\!=\!2m+n$, and $\Delta(a,b) =\frac{b}{a},\cdots,\frac{a+b-1}{a}$.

Moreover, the average DL rate is given by
\begin{align}\label{eq:Rd ZF}
 \mathcal{R}_\Ds
 &\!=\left(\int_{0}^{\infty} \left(1 \!-\! \eta_\Ds^\ZF G_{t, 0}^{0, t}
\left( \varsigma_\Ds^\ZF\left(\frac{m}{z}\right)^m \!\  \Big\vert \ \! {\Delta(m,0),\Delta(n,1) \atop -}\!\right)\right)\right.\nonumber\\
&\left.\hspace{2em}\times\frac{\exp(-z)}{z(1 +P_u\Sap z)}dz\right)\left(1-e^{-\lambda \PD (\pi\!-\phi) R^2}\right),
\end{align}
where $\eta_\Ds^\ZF =\sqrt{\frac{mn}{(2\pi)^{t-2}}}$ and $\varsigma_\Ds^\ZF = \left(\frac{n}{\Galf P_b^{\delta}}\right)^n$.
\end{proposition}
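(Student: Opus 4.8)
The plan is to handle the UL and DL rates separately: in each case I reduce the spatial average to an expectation of $\ln(1+\text{SINR})$ against the extreme-value law furnished by Lemma~\ref{lemma:cdf X}, then evaluate the resulting one-dimensional integral in closed form using Mellin--Barnes / Meijer $G$-function machinery. For the uplink I first condition on whether the interfering set $\Phi_{\Ds}\cap\mathcal{A}$ is empty. Since $\Phi_{\Ds}$ is a PPP of density $\PD\lambda$ and the admissible region $\mathcal{A}$ (the disc minus the sectorized IR) has area $(\pi-\phi)R^2$, the void probability is $p_\emptyset=e^{-\lambda\PD(\pi-\phi)R^2}$, independent of the UL channels. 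On this event there is no inter-RRH interference, the constraint in~\eqref{eqn:wt} is vacuous, and $\qw_{r,p}^{\ZF}$ collapses to MRC so that $\SINRuS=P_u\ell(x_p)\|\qg_p\|^2$ with an $M$-dimensional array gain; on the complement ZF removes one spatial degree of freedom and $\SINRuS=P_u\ell(x_p)\|\tilde\qg_p\|^2$ with $\tilde\qg_p\in\mathbb{C}^{(M-1)\times1}$. Conditioning thus yields the two-term split~\eqref{eq:Ru ZF/MRT}, with $\mathcal{R}_M$ and $\mathcal{R}_{(M-1)}$ the rates when the effective gain is $\mathrm{Gamma}(M,1)$ and $\mathrm{Gamma}(M-1,1)$, respectively.

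For $\mathcal{R}_M$ I set $Y=\max_{p\in\Phi_{\Us}}\ell(x_p)\|\qg_p\|^2$. Repeating the PPP marking argument behind Lemma~\ref{lemma:cdf X}, now over the full selection region (angle $\phi=0$) and UL density $(1-\PD)\lambda$, gives $F_Y(y)=\exp(-\mathcal{G}(M,0,\delta,(1-\PD)\lambda)\,y^{-\delta})$. I then write
\[
\mathcal{R}_M=\E{\ln(1+P_u Y)}=\int_0^\infty\frac{P_u}{1+P_u y}\bigl(1-F_Y(y)\bigr)\,dy,
\]
and evaluate this by substituting the Mellin--Barnes representations of $1-F_Y$ and of $(1+P_u y)^{-1}$. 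Because $\delta=m/n$ is rational, the Gauss multiplication theorem splits each Gamma factor of fractional argument into $m$ (respectively $n$) Gamma factors; this is precisely what generates the parameter blocks $\Delta(n,0),\Delta(m,0),\Delta(m,1)$ over $\Delta(m,0),\Delta(m,0)$ together with the prefactors $\eta_\Us^\ZF,\varsigma_\Us^\ZF$ of~\eqref{eq:Ru ZF}. The expression for $\mathcal{R}_{(M-1)}$ then follows verbatim by the substitution $M\mapsto M-1$.

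For the downlink I keep the factor $(1-p_\emptyset)$ out front, since a DL signal exists only when $\Phi_{\Ds}\cap\mathcal{A}\neq\emptyset$. With $W=|h_{\mathsf{LI}}|^2$ exponential of mean $\Sap$ and $X$ as in Lemma~\ref{lemma:cdf X}, the key device is the Frullani-type identity
\[
\ln\!\left(1+\tfrac{P_b X}{P_u W+1}\right)=\int_0^\infty\frac{e^{-z(P_u W+1)}}{z}\bigl(1-e^{-zP_b X}\bigr)\,dz,
\]
which isolates $W$ and $X$ in separate exponentials. Taking expectations and using independence, $\E{e^{-zP_u W}}=(1+P_u\Sap z)^{-1}$ — the exact factor in~\eqref{eq:Rd ZF} — while $\E{e^{-zP_b X}}$ is the Laplace transform of the density of $X$, which the same multiplication-theorem step expresses as $\eta_\Ds^\ZF G_{t,0}^{0,t}(\cdot)$. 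Assembling both factors under the $z$-integral and multiplying by $(1-p_\emptyset)$ reproduces~\eqref{eq:Rd ZF}.

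The main obstacle is the (routine but delicate) conversion of the two Laplace / complementary-cdf integrals into single Meijer $G$-functions: one must apply the Gauss multiplication theorem to the fractional-order Gamma functions and then match the resulting contour integral to the standard $G$-function definition, keeping the prefactors $\eta$ and arguments $\varsigma$ carrying the correct powers of $m$, $n$, $P_u$, $P_b$, and $\mathcal{G}(\cdot)$. A secondary point requiring justification is that, under ZF, the relevant order statistic is built from $\mathrm{Gamma}(M-1,1)$ projected gains, so that Lemma~\ref{lemma:cdf X} applies with $M\mapsto M-1$; this rests on the projected-channel distribution established in~\cite{Mohammad:TCOM:2015}.
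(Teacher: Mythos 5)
Your proposal is correct and follows what is evidently the paper's own route (the authors omit the proof for space, but the structure of the stated result betrays it): splitting $\mathcal{R}_\Us$ on the void event of $\Phi_\Ds\cap\mathcal{A}$ with probability $e^{-\lambda\PD(\pi-\phi)R^2}$ so that ZF degrades the array gain from $M$ to $M-1$, applying Lemma~\ref{lemma:cdf X} with $\phi\rightarrow 0$, $\PD\rightarrow(1-\PD)$ (and $M\rightarrow M-1$ via the projected-channel result of~\cite{Mohammad:TCOM:2015}), and then using the complementary-cdf integral for the UL and the Frullani/MGF decomposition for the DL --- the factor $\frac{e^{-z}}{z(1+P_u\Sap z)}$ in~\eqref{eq:Rd ZF} is exactly your $\E{e^{-z P_u W}}$ term --- before invoking the Gauss multiplication theorem with $\delta=m/n$ to assemble the Meijer $G$-functions. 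I checked that your contour manipulations indeed reproduce the parameter blocks $\Delta(n,0),\Delta(m,0),\Delta(m,1)$ over $\Delta(m,0),\Delta(m,0)$ and the constants $\eta_\Us^\ZF=\sqrt{n/(2\pi)^{v-3}}$, $\eta_\Ds^\ZF=\sqrt{mn/(2\pi)^{t-2}}$, and the arguments $\varsigma_\Us^\ZF$, $\varsigma_\Ds^\ZF$, so no gap remains.
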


\begin{proof}
The proof is omitted due to space limitations.
\end{proof}

\subsection{MRC/MRT Scheme}
For DL transmission, the receive SINR of the MRC/MRT and ZF/MRT processing are the same. Therefore, the average DL rate of MRC/MRT is given by~\eqref{eq:Rd ZF}. Now, we derive the the average UL rate of MRC/MRT processing.

Substituting $\qw_{r,p}^{\MRC}$ and $\qw_{t,q}^{\MRT}$ into~\eqref{eq:SINRu SR}, the received SINR at the BBU can be expressed as
\begin{align}\label{eq:SINRu SR2}
 \mathsf{SINR_u^S} &=  \frac{P_u \ell(x_{p})\|\qg_{p}\|^2}
 { P_b\ell(x_p,x_{q})\sum_{i=1}^M Z_i+ 1},
\end{align}
where $Z_i  = U_i V_i$  with $U_i  = |\qw_{r,p}^{\MRC^\dag}\qh_{\mathsf{ud}i}^{pq}|^2$ and $V_i = (w_{t,q,i}^{\MRT})^2$ where $\qh_{\mathsf{ud}i}^{pq}$ is the $i$th column of $\vH_{\Us\Ds}^{pq}$ (i.e., $\vH_{\Us\Ds}^{pq} = [\vh_{\mathsf{ud}1}^{pq},\vh_{\mathsf{ud}2}^{pq},\cdots,\vh_{\mathsf{ud}M}^{pq}]$) and $w_{t,q,i}^{\MRT}$ is the $i$th element of $\qw_{t,q}^{\MRT}$.
For the notational convenience, let us denote $W = P_u \ell(x_{p})\|\qg_{p}\|^2 $, and $Z =P_b\ell(x_p,x_{q})\sum_{i=1}^M Z_i $.

Note that the MGF of $W$ follows from~\eqref{eq:cdf of X} by making the substitution of corresponding parameters, i.e., $\phi\rightarrow0$ and $\PD\rightarrow(1-\PD)$ and then using the differentiation property of the Laplace transform. We now characterize the cdf of $Z_i$ in the following lemma which will be used to establish the average UL rate due to MRC/MRT processing.

\begin{Lemma}~\label{lemma:cdf Zi}
The exact cdf of $Z_i$ can be expressed as
\begin{align}\label{eq:cdf Zi}
F_{Z_i}(t) & =    G_{3 4}^{3 1} \left( \Sap t \  \Big\vert \  {1, M, M \atop 1, 1, M, 0} \right).
\end{align}
\end{Lemma}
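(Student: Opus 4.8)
The plan is to recognize $Z_i$ as the product of two independent random variables with standard laws, determine each marginal, and then obtain the distribution of the product through a Mellin-transform (equivalently, a product-of-Meijer-G) argument, finishing with the integration that converts the density into the cdf.

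First I would pin down the two factors. Since $\qw_{r,p}^{\MRC}=\qg_{p}/\|\qg_{p}\|$ is a unit-norm vector that is statistically independent of the inter-RRH channel column $\qh_{\mathsf{ud}i}^{pq}$ (the entries of $\qH_{\mathsf{ud}}^{pq}$ being i.i.d.\ complex Gaussian with power $\Sap$), the inner product $\qw_{r,p}^{\MRC^\dag}\qh_{\mathsf{ud}i}^{pq}$ is, conditioned on the beamformer, a zero-mean complex Gaussian of variance $\Sap\|\qw_{r,p}^{\MRC}\|^2=\Sap$. Hence $U_i=|\qw_{r,p}^{\MRC^\dag}\qh_{\mathsf{ud}i}^{pq}|^2$ is exponential (with mean $\Sap$), independently of $\qw_{r,p}^{\MRC}$. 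For the second factor, $w_{t,q,i}^{\MRT}=h_{q,i}/\|\qh_{q}\|$, so $V_i=|h_{q,i}|^2/\|\qh_{q}\|^2$ is the ratio of one squared Gaussian magnitude to the sum of $M$ such magnitudes, i.e.\ a Beta$(1,M-1)$ variable with density $(M-1)(1-v)^{M-2}$ on $[0,1]$. Because $\qg_{p}$, $\qh_{q}$ and $\qH_{\mathsf{ud}}^{pq}$ are mutually independent, $U_i$ and $V_i$ are independent.

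Next I would compute the law of $Z_i=U_iV_i$. The most direct route conditions on $V_i$,
\begin{align}
F_{Z_i}(t)=\int_0^1\left(1-e^{-t/(\Sap v)}\right)(M-1)(1-v)^{M-2}\,dv,\nonumber
\end{align}
after which the exponential is written as $e^{-x}=G_{0,1}^{1,0}\!\left(x\,\vert\,{-\atop 0}\right)$, the Beta density as a $G_{1,1}^{1,0}$, and the $v$-integral evaluated by a tabulated Mellin--Barnes integral of a product of two $G$-functions, collapsing to a single Meijer-G. Equivalently, I would use Mellin transforms: with $\mathbb{E}[U_i^{s-1}]=\Sap^{s-1}\Gamma(s)$ and $\mathbb{E}[V_i^{s-1}]=\Gamma(s)\Gamma(M)/\Gamma(s+M-1)$, the Mellin transform of the density of $Z_i$ is the product $\Sap^{s-1}\Gamma(s)^2\Gamma(M)/\Gamma(s+M-1)$, whose inverse is a Meijer-G; integrating once in $t$ to pass from density to cdf introduces a factor $1/s$ (a pole at the origin) and shifts the parameter lists, yielding the $G_{3,4}^{3,1}$ with the stated parameters $\{1,M,M\}$ on top and $\{1,1,M,0\}$ on the bottom.

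The routine parts are the two marginal identifications and the final table look-up; the delicate step is the middle one --- carrying out the Mellin convolution of the exponential and Beta factors and then the density-to-cdf integration while tracking every Gamma factor, so that the resulting Mellin--Barnes integrand reduces to the canonical form $\Gamma(1-s)^2\Gamma(s)/(\Gamma(1+s)\Gamma(M-s))$ matching the claimed parameter set. Interpreting the output, the repeated $M$ in the upper and lower lists reflects the cancellation $\Gamma(M-s)/\Gamma(M-s)$ in the integrand, the lower $0$ and the accompanying $1$ arise from the $1/s$ pole created by the cdf integration, and the remaining $1$'s encode the unit exponent of the Beta density; the constant $\Gamma(M)$ from the Beta normalization together with the scaling by $\Sap$ are precisely what must be handled with care to produce the argument $\Sap t$ and the correct overall normalization.
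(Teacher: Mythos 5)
Your decomposition is the natural one and surely the intended route: conditioned on the unit-norm MRC vector, $\qw_{r,p}^{\MRC\dag}\qh_{\mathsf{ud}i}^{pq}$ is zero-mean complex Gaussian, so $U_i$ is exponential; $V_i=|h_{q,i}|^2/\|\qh_q\|^2\sim\mathrm{Beta}(1,M-1)$; and independence holds because $\qg_p$, $\qh_q$, $\qH_{\mathsf{ud}}^{pq}$ are mutually independent. One point you should state explicitly rather than leave implicit: $p$ and $q$ are \emph{selected} by maximizing $\ell(x_k)\|\qg_k\|^2$ and $\ell(x_k)\|\qh_k\|^2$, so you need the fact that for an isotropic complex Gaussian vector the direction is independent of the norm; selection conditions only on norms and positions, which leaves the direction of $\qh_q$ uniform and hence $V_i$ with its unconditional Beta law. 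That is easily patched.

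The genuine gap is in the final bookkeeping, which you gesture at ("must be handled with care") but do not carry out --- and it does not come out as you assert. Multiplying your own Mellin data, $\mathbb{E}\{U_i^{s-1}\}=\Sap^{s-1}\Gamma(s)$ and $\mathbb{E}\{V_i^{s-1}\}=\Gamma(s)\Gamma(M)/\Gamma(s+M-1)$, and performing the density-to-cdf integration (which contributes $\Gamma(-s)/\Gamma(1-s)$ after the shift $s\to s+1$) gives
\begin{align}
F_{Z_i}(t)=\Gamma(M)\,\frac{1}{2\pi i}\int_{\mathcal{L}}\frac{\Gamma(1+s)^2\,\Gamma(-s)}{\Gamma(1-s)\,\Gamma(M+s)}\left(\frac{t}{\Sap}\right)^{-s}ds=\Gamma(M)\,G_{3, 4}^{3, 1}\!\left(\frac{t}{\Sap}\ \Big\vert\ {1, M, M \atop 1, 1, M, 0}\right),\nonumber
\end{align}
which differs from what you claim to have reproduced in two ways. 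First, the constant $\Gamma(M)$ cannot be absorbed into the parameter list: with the stated parameters the residue at $s=0$ gives $G_{3,4}^{3,1}(\cdot)\to 1/\Gamma(M)$ as $t\to\infty$, so the prefactor is mandatory for the cdf to tend to $1$; your ``canonical integrand'' $\Gamma(1-s)^2\Gamma(s)/(\Gamma(1+s)\Gamma(M-s))$ indeed lacks it, and a one-line sanity check at $t\to\infty$ (or at $M=1$, where the expression must reduce to $1-e^{-x}=G_{1,2}^{1,1}(x\,\vert\,{1\atop 1,0})$) would have exposed that the lemma as printed agrees with the correct cdf only when $\Gamma(M)=1$, i.e.\ $M\le 2$. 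Second, the scaling is inverted relative to your assumptions: if the entries of $\qH_{\mathsf{ud}}^{pq}$ carry power $\Sap$ so that $U_i$ has \emph{mean} $\Sap$, the argument is $t/\Sap$, not $\Sap t$; obtaining $\Sap t$ requires $U_i$ exponential with \emph{rate} $\Sap$ (and note the paper's system model actually declares unit-variance inter-RRH entries, so the role of $\Sap$ here needed flagging, not silent assertion). In short: right decomposition, right Mellin machinery, correct identification of where the repeated $M$'s and the $0$/$1$ pair in the parameter lists come from --- but the asserted final identity is not the one your computation produces, being off by the factor $\Gamma(M)$ and by the $\Sap\leftrightarrow 1/\Sap$ orientation of the argument.
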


\begin{proof}
The proof is omitted due to space limitations.
\end{proof}

\begin{proposition}\label{Propos:Rc:MRC}
The average UL rate achieved by the SRA scheme with MRC/MRT processing is expressed as
\begin{align}\label{eq:Ru:MRC}
&\hspace{0em}\mathcal{R}_\Us=
\int_{0}^{\infty}
\int_0^{2R}
\left(G_{4 4}^{3 2} \left( \frac{\Sap r^{\alpha}}{P_bz} \  \Big\vert \  {0,1, M, M \atop 1, 1, M, 0} \right)\right)^{M}
\nonumber\\
&\hspace{2em}\times
\left(1-\!\eta_\Us^\MRC G_{t, 0}^{0, t}
\left(\varsigma_\Us^\MRC\!\left(\frac{m}{s}\right)^m \! \ \Big\vert \ \!\! {\Delta(m,0),\Delta(n,1) \atop -}\!\right)\!\right)\nonumber\\
&\hspace{2em}\times\frac{\exp(-z)}{z} f_{r}(r) dr dz\!,
\end{align}
where $\eta_\Us^\MRC = \eta_\Ds^\ZF$, $\varsigma_\Us^\MRC=\left(\frac{n}{\mathcal{G}(M,0,\delta,(1-\PD)\lambda)P_u^{\delta}}\right)^n $ and $f_{r}(r)$ is given in~\cite{Moltchanov:Distance}.

\end{proposition}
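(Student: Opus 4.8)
The plan is to start from $\mathcal{R}_\Us = \E{\ln(1+\SINRuS)}$ with $\SINRuS = \frac{W}{Z+1}$ as in~\eqref{eq:SINRu SR2}, writing $W = P_u \ell(x_{p})\|\qg_{p}\|^2$ for the numerator and $Z = P_b\ell(x_p,x_{q})\sum_{i=1}^M Z_i$ for the inter-RRH interference. The analytic engine is the Frullani-type identity $\ln(b)=\int_0^\infty \frac{e^{-z}-e^{-zb}}{z}\,dz$, applied to $\ln(W+Z+1)-\ln(Z+1)$, which gives
\begin{align}
\ln\!\left(1+\frac{W}{Z+1}\right) = \int_0^\infty \frac{e^{-z(Z+1)}\left(1-e^{-zW}\right)}{z}\,dz. \nonumber
\end{align}
Because $W$ (a function of the selected UL RRH channel $\qg_p$ and location $x_p$) is statistically independent of $Z$ (a function of $\HudS$, the MRT vector $\qh_q$, and the DL RRH), taking expectations and exchanging integration and expectation (Tonelli, nonnegative integrand) yields the master formula
\begin{align}
\mathcal{R}_\Us = \int_0^\infty \frac{e^{-z}\,\mathcal{M}_Z(z)\left(1-\mathcal{M}_W(z)\right)}{z}\,dz, \nonumber
\end{align}
with $\mathcal{M}_Z(z)=\E{e^{-zZ}}$, $\mathcal{M}_W(z)=\E{e^{-zW}}$, and the factor $e^{-z}$ accounting for the additive noise.

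The two transforms are then evaluated separately. For $\mathcal{M}_W$ I would follow the remark preceding Lemma~\ref{lemma:cdf Zi}: the cdf of $W$ is obtained from~\eqref{eq:cdf of X} by setting $\phi\to 0$, $\PD\to 1-\PD$, and rescaling by $P_u$, giving the Fréchet-type form $F_W(t)=\exp\!\big(-\mathcal{G}(M,0,\delta,(1-\PD)\lambda)P_u^{\delta}t^{-\delta}\big)$; differentiating and Laplace-transforming this density, clearing the fractional exponent $\delta=m/n$ with the Gauss--Legendre multiplication theorem to reach a Meijer $G$-function~\cite[Eq.~(9.301)]{Integral:Series:Ryzhik:1992}, reproduces $\mathcal{M}_W(z)=\eta_\Us^\MRC G_{t,0}^{0,t}\!\big(\varsigma_\Us^\MRC (m/z)^m \,\vert\, {\Delta(m,0),\Delta(n,1) \atop -}\big)$, exactly the $W$-factor of~\eqref{eq:Ru:MRC} (where $s$ should read $z$); this is the same transform already used for the DL factor in~\eqref{eq:Rd ZF}. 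For $\mathcal{M}_Z$ I would condition on the inter-RRH distance $r=\|x_p-x_q\|$, so that $Z=P_b r^{-\mu}\sum_{i=1}^M Z_i$ with the $Z_i$ i.i.d. as in Lemma~\ref{lemma:cdf Zi}, whence $\mathcal{M}_{Z\mid r}(z)=\big(\mathcal{M}_{Z_i}(z P_b r^{-\mu})\big)^M$. Differentiating the cdf of $Z_i$ and applying the Laplace transform of a Meijer $G$-function~\cite[Eq.~(7.813)]{Integral:Series:Ryzhik:1992} turns the $G_{3 4}^{3 1}$ of Lemma~\ref{lemma:cdf Zi} into $\mathcal{M}_{Z_i}(s)=G_{4 4}^{3 2}\!\big(\Sap/s \,\vert\, {0,1,M,M \atop 1,1,M,0}\big)$; substituting $s=z P_b r^{-\mu}$ gives the argument $\frac{\Sap r^{\mu}}{P_b z}$ ($\mu$ written $\alpha$ in~\eqref{eq:Ru:MRC}), and the $M$-th power is the first factor there.

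Finally I would remove the conditioning by averaging over $r$ against the pdf $f_r(r)$ of the distance between two RRHs in the disc $\mathcal{D}$, supported on $[0,2R]$ and taken from~\cite{Moltchanov:Distance}; this replaces $\mathcal{M}_Z(z)$ in the master formula by $\int_0^{2R}\big(\mathcal{M}_{Z_i}(z P_b r^{-\mu})\big)^M f_r(r)\,dr$. Collecting the noise factor $e^{-z}$, the $1/z$ from the Frullani integral, the term $1-\mathcal{M}_W(z)$, and the $r$-averaged interference transform gives precisely the double integral of~\eqref{eq:Ru:MRC}. I expect the main obstacle to be the special-function bookkeeping in the two Laplace-transform steps rather than the probabilistic set-up: casting both the Fréchet survival function $\exp(-\mathcal{G}(\cdot)t^{-\delta})$ and the $G_{3 4}^{3 1}$ density into Meijer-$G$ transforms forces the fractional parameter $\delta=m/n$ to be resolved through the multiplication theorem, and it is this step that pins down the orders $t=m+n$, $v=2m+n$, the index lists $\Delta(\cdot,\cdot)$, and the constants $\eta_\Us^\MRC,\varsigma_\Us^\MRC$; a secondary point is justifying the independence of $W$ from the distance $r$ and the i.i.d. treatment of the $Z_i$, both consistent with the selection model underlying Lemma~\ref{lemma:cdf X} and with~\cite{Peng:WCL:2014,Ratnarajah:TSP:2015}.
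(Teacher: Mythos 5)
The paper gives no proof of this proposition at all (``omitted due to space limitations''), so there is nothing to compare line by line; judged on its own terms, your reconstruction is correct and is essentially forced by the shape of \eqref{eq:Ru:MRC}. The Frullani/MGF identity $\E{\ln\left(1+\tfrac{W}{Z+1}\right)}=\int_0^\infty \frac{e^{-z}}{z}\,\mathcal{M}_Z(z)\left(1-\mathcal{M}_W(z)\right)dz$ accounts for the $e^{-z}/z$ kernel; the Laplace transform of the Fr\'echet-type cdf of $W$ (Lemma~\ref{lemma:cdf X} with $\phi\to0$, $\PD\to1-\PD$, $P_b\to P_u$), with the fractional exponent $\delta=m/n$ resolved by Gauss multiplication, gives the $1-\eta_\Us^\MRC G_{t,0}^{0,t}(\cdot)$ factor in exact analogy with the DL factor in \eqref{eq:Rd ZF}; and applying the Laplace transform of a Meijer $G$-function to Lemma~\ref{lemma:cdf Zi} yields $\mathcal{M}_{Z_i}(s)=G_{4,4}^{3,2}\left(\Sap/s \,\big\vert\, {0,1,M,M \atop 1,1,M,0}\right)$, which under $s=zP_br^{-\alpha}$ produces the first factor of \eqref{eq:Ru:MRC}; averaging over $r$ against $f_r$ from \cite{Moltchanov:Distance} then assembles the double integral. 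You are also right that the $s$ in the published formula is a typo for $z$, and your parameter bookkeeping ($t=m+n$, the $\Delta(\cdot,\cdot)$ lists, $\eta_\Us^\MRC=\eta_\Ds^\ZF$, $\varsigma_\Us^\MRC$) matches the statement.

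The one step you should not present as established is $\mathcal{M}_{\sum_{i=1}^M Z_i}(s)=\left(\mathcal{M}_{Z_i}(s)\right)^M$. The weights $V_i=(w_{t,q,i}^{\MRT})^2$ satisfy $\sum_{i=1}^M V_i=1$ (they are jointly Dirichlet), so the $Z_i$ are exchangeable but \emph{not} independent: already for $M=2$ one checks directly that $\E{\tfrac{1}{(1+sV_1)(1+sV_2)}}\neq \left(\E{\tfrac{1}{1+sV_1}}\right)^2$. The $M$-th power in \eqref{eq:Ru:MRC} is therefore an independence approximation --- one embedded in the proposition itself and inherited from the predecessor analysis in \cite{Mohammad:TCOM:2015} --- and your closing appeal to ``the selection model underlying Lemma~\ref{lemma:cdf X}'' does not actually justify it. The same caveat applies to treating $r=\|x_p-x_q\|$ as an independent RV with the uniform-points distance pdf on $[0,2R]$, and to the independence of $W$ from $(r,Z)$ under best-RRH selection, both of which are modeling simplifications baked into the statement. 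So your derivation does reproduce the formula as stated, but a careful write-up should declare these as assumptions or approximations rather than consequences of the setup.
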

\begin{proof}
The proof is omitted due to space limitations.
\end{proof}

\begin{proposition}
The average DL rate achieved by the ARA scheme with MRC/MRT and ZF/MRT processing can be expressed as
\begin{align}\label{eq:Rd singular}
\mathcal{{R}}_\Ds &\!=\!\sum_{k=1}^{\infty}\frac{( \delta\Gamma(-\delta)P_b^{\delta}\Galf )^{k}}{\Gamma(k+1)} \nonumber\\
& \qquad\times G_{2 1}^{1 2} \left( P_u \Sap  \!\!\  \Big\vert \ \!\! {1 \!-\! \delta k,0 \atop 0} \right).
\end{align}
\end{proposition}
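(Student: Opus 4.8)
The plan is to specialize~\eqref{eq:SINRd AR} to MRT transmit beamforming. Substituting $\qw_{t,i}^{\MRT}=\qh_i/\|\qh_i\|$ gives $\|\qh_i^{\dag}\qw_{t,i}\|^2=\|\qh_i\|^2$, so that $\SINRdA=S/N$ with aggregate useful signal $S=\sum_{i\in\Phd}P_b\,\ell(x_i)\|\qh_i\|^2$ and loopback-plus-noise term $N=P_u|h_{\mathsf{LI}}|^2+1$. Since $\qh_i\sim\mathcal{CN}(\mathbf{0}_M,\qI_M)$ the marks $\|\qh_i\|^2$ are $\Gamma(M,1)$ and $|h_{\mathsf{LI}}|^2$ is exponential with mean $\Sap$. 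Because the DL beamformer is MRT under both MRC/MRT and ZF/MRT, this SINR, and hence the DL rate, is shared by the two schemes, which justifies the joint statement. I would then represent the rate through the Frullani identity $\ln(1+x)=\int_0^\infty s^{-1}\big(e^{-s}-e^{-(1+x)s}\big)\,ds$, giving
\begin{align}
\mathcal{R}_\Ds=\E{\ln(1+S/N)}=\int_0^\infty\frac{e^{-s}}{s}\Big(1-\E{e^{-sS/N}}\Big)\,ds.\nonumber
\end{align}

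Next I would condition on $N$ and evaluate $\E{e^{-(s/N)S}}$ with the probability generating functional of the PPP $\Phd$. Passing to polar coordinates, letting the disc radius $R\to\infty$, and using the standard planar integral (valid for $\mu>2$)
\begin{align}
\int_0^\infty\!\Big(1-\E{e^{-\tau P_b\|\qh\|^2 r^{-\mu}}}\Big)2\pi r\,dr=\pi\,\Gamma(1-\delta)\,P_b^\delta\,\frac{\Gamma(M+\delta)}{\Gamma(M)}\,\tau^\delta,\nonumber
\end{align}
in which the fractional moment $\E{\|\qh\|^{2\delta}}=\Gamma(M+\delta)/\Gamma(M)$ is exactly the factor appearing inside $\Galf$ (here evaluated with no sector exclusion, $\phi=0$), yields the one-sided stable form $\E{e^{-\tau S}}=\exp\!\big(-\Galf\,\Gamma(1-\delta)\,P_b^\delta\,\tau^\delta\big)$. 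Invoking the reflection relation $\Gamma(1-\delta)=-\delta\,\Gamma(-\delta)$ rewrites the exponent with the factor $\delta\,\Gamma(-\delta)\,P_b^\delta\,\Galf$. Setting $\tau=s/N$, expanding the exponential as a power series in $\tau^\delta$, interchanging the summation with the expectation over $N$ and the $s$-integral, and using $\int_0^\infty s^{\delta k-1}e^{-s}\,ds=\Gamma(\delta k)$, I obtain a single series whose general term is $\frac{(\delta\Gamma(-\delta)P_b^\delta\Galf)^k}{\Gamma(k+1)}\,\Gamma(\delta k)\,\E{N^{-\delta k}}$.

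It then remains to evaluate $\Gamma(\delta k)\,\E{N^{-\delta k}}$. Writing $\E{N^{-\delta k}}=\int_0^\infty(1+P_u\Sap y)^{-\delta k}e^{-y}\,dy$ and representing the power law through $(1+z)^{-\delta k}=\frac{1}{\Gamma(\delta k)}G_{1,1}^{1,1}\!\big(z\mid{1-\delta k\atop 0}\big)$, the Laplace-type integral of the Meijer $G$-function~\cite{Integral:Series:Ryzhik:1992} gives $\Gamma(\delta k)\,\E{N^{-\delta k}}=G_{2,1}^{1,2}\!\big(P_u\Sap\mid{1-\delta k,\,0\atop 0}\big)$, which is precisely the $G$-factor in~\eqref{eq:Rd singular}; collecting terms and using $k!=\Gamma(k+1)$ completes the derivation.

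The main obstacle is the rigorous justification of the term-by-term manipulation: since the series is the Taylor expansion of a one-sided stable Laplace transform, I would have to argue convergence and the legitimacy of exchanging the infinite sum with both the $s$-integral and the $N$-expectation (e.g. by dominated convergence on finite truncations), and to handle the $r^{-\mu}$ singularity in the PPP functional cleanly so that the constant condenses into $\Galf$ under $R\to\infty$. I would also carefully track the overall sign through $\Gamma(1-\delta)=-\delta\Gamma(-\delta)$ to confirm that the assembled series yields a nonnegative $\mathcal{R}_\Ds$.
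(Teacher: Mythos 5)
Since the paper omits its own proof of this proposition, your attempt can only be checked against the statement and the surrounding machinery, and your pipeline is exactly the canonical derivation that the formula's structure encodes: MRT on the DL makes the SINR common to MRC/MRT and ZF/MRT; the PGFL of $\Phd$ gives the one-sided stable form $\E{e^{-\tau S}}=\exp\left(-\Galf\,\Gamma(1-\delta)P_b^{\delta}\tau^{\delta}\right)$; the identity $\ln(1+x)=\int_0^\infty s^{-1}(e^{-s}-e^{-(1+x)s})\,ds$ plus termwise $s$-integration produces $\Gamma(\delta k)\E{N^{-\delta k}}$; and the pair $(1+z)^{-a}=\frac{1}{\Gamma(a)}G_{1,1}^{1,1}\left(z\,\big\vert\,{1-a\atop 0}\right)$ with the Laplace-transform formula for Meijer $G$-functions indeed yields $\Gamma(\delta k)\E{N^{-\delta k}}=G_{2,1}^{1,2}\left(P_u\Sap\,\big\vert\,{1-\delta k,\,0\atop 0}\right)$ (the two upper parameters may be permuted). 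The interchanges you worry about are routine: $\Gamma(\delta k)/\Gamma(k+1)$ decays superexponentially for $\delta<1$ (i.e., $\mu>2$) and $\E{N^{-\delta k}}\leq 1$, so the series converges absolutely and Fubini applies; the $R\to\infty$ extension is the same approximation the paper already makes in Lemma~\ref{lemma:cdf X}. One internal inconsistency to fix: your parenthetical ``no sector exclusion, $\phi=0$'' contradicts your own final constant. To land on $\Galf$ with the factor $(\pi-\phi)$, the angular integral must run over $2(\pi-\phi)$, i.e., the active DL set is $\Phi_{\Ds}\cap\mathcal{A}$ (DL RRHs inside the IR are silenced); with no exclusion you would get $\mathcal{G}(M,0,\delta,\PD\lambda)$ instead.

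The sign you explicitly deferred is the one substantive point, and tracking it shows your assembled series (which matches the proposition verbatim) is off by an overall minus. The expansion gives $1-e^{-C x^{\delta}}=\sum_{k\geq 1}\frac{(-1)^{k+1}C^{k}}{k!}x^{\delta k}$ with $C=\Gamma(1-\delta)P_b^{\delta}\Galf>0$, whereas your reported general term uses $(\delta\Gamma(-\delta)P_b^{\delta}\Galf)^{k}=(-C)^{k}$, which is $(-1)^{k}C^{k}$ rather than $(-1)^{k+1}C^{k}$. Since each $G$-factor equals $\Gamma(\delta k)\E{N^{-\delta k}}>0$, the series as printed has a negative first term and cannot equal the nonnegative rate; the derivation actually yields $\mathcal{R}_{\Ds}=\sum_{k\geq1}\frac{(-1)^{k+1}\left(\Gamma(1-\delta)P_b^{\delta}\Galf\right)^{k}}{\Gamma(k+1)}\,G_{2,1}^{1,2}\left(P_u\Sap\,\big\vert\,{1-\delta k,\,0\atop 0}\right)$, equivalently minus the proposition's right-hand side. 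This is a typo-level defect of the target statement itself (your method exposes it correctly), but a complete proof should state the corrected sign rather than leave it ``to confirm,'' since as written your final line and your own series expansion disagree.
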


\begin{proof}
The proof is omitted due to space limitations.
\end{proof}

\section{Numerical Results and Discussion}\label{sec:numerical}
We now present several numerical examples of full-duplex C-RAN performance. The maximum transmit power of the DL RRHs and the full-duplex user are set to $23$ dBm. The power spectral density of noise is set as $-120$~dBm/Hz. We assume that $R=150$ m, $\alpha=3$ and $\lambda=0.001$.
\begin{figure}[t]
\centering
\vspace{0em}
\includegraphics[width=88mm, height=67mm]{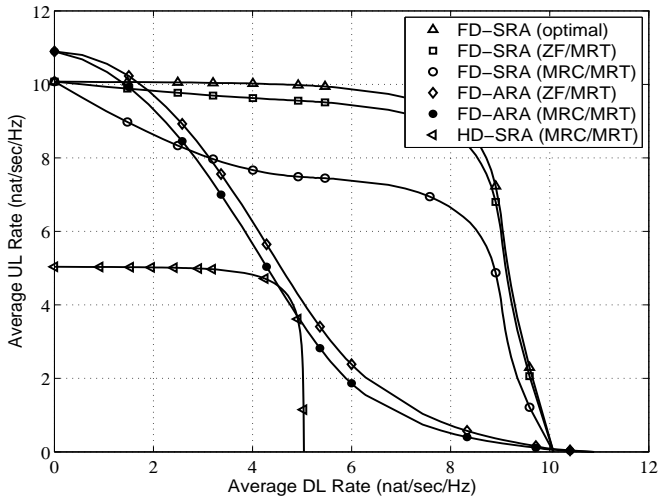}
\caption{Rate region of the ARA and SRA schemes for full-duplex and half-duplex modes of operation ($M=3$ and $\phi=\pi/3$).}
\label{fig: rate region}
\end{figure}

Fig.~\ref{fig: rate region} shows the
rate region of the ARA and SRA schemes for both full-duplex
and half-duplex modes of operation. A half-duplex user employs orthogonal UL and DL time slots for operation. Consequently, with the ARA scheme and MRC/MRT precessing, the average sum rate of the half-duplex user is given by
\begin{align*}\label{eq: sum rate of single-antenna HD AP}
\RHDs&\!=\tau {\tt E}\{\ln(1 \!+\! \SNRd )\} + (1\!-\!\tau){\tt E}\{\ln(1 \!+\!\SNRu  )\},
\end{align*}
where $\tau$ is a fraction of the time slot duration of $T$, used for DL transmission, \small{$\SNRd=\sum_{i \in \Phi_\Ds } P_b \ell(x_{i})\|\qh_{i}^{\dag}\qw_{t,i}\|^2  $ }\normalsize and \small{$\SNRu=\sum_{j\in \Phi_{\mathsf{u}}} P_u \ell(x_{j})\|\qw_{r,j}^{\dag}\qg_{j}\|^2$. }\normalsize We have set $P_u=23$ dBm, $P_b=23$ dBm, $\Sap=-40$ dBm, and $\tau=0.5$ in Fig.~\ref{fig: rate region} and change $\PD$ from zero (only UL transmission) to one (only DL transmission).

For the ARA scheme with ZF/MRT processing we assume that each UL RRH adjusts its receive beamforming vector in such a way that the interference from its nearest DL RRH is canceled. These results reveal that the rate region of the ARA scheme is strongly biased towards UL or DL. However, the SRA scheme can guarantee a more balanced rate region. For this setup, SRA scheme with the optimal beamforming design and ZF/MRT can achieve up to $89\%$ and $80\%$ average sum rate gains as compared to the half-duplex SRA counterparts, respectively. Our observation of the relation between the rate region of MRC/MRT and IR parameter $\phi$ (which is not shown for the sake of clarity) shows that there is an optimal $\phi$ that tends MRC/MRT rate region towards the ZF/MRT one.

Fig.~\ref{fig: sum rate rate} shows the impact of the IR region parameter $\phi$ on the sum rate of different beamformer designs at the DL and UL RRHs and for the SRA scheme. Intuitively, increasing the $\phi$ (shrinking the selection region) decreases the number of DL RRHs and consequently the DL rate. Moreover, the UL rates of optimum and ZF/MRT designs remain constant to produce an overall sum rate decrease as $\phi$ is increased.
On the contrary, increasing $\phi$ improves the performance of MRC/MRT  because  the inter-RRH interference between the selected UL RRH and DL RRH is reduced.
Clearly, increasing $\phi$ beyond its optimum value does not improve the sum rate of MRC/MRT processing due to the fact that there may not be sufficient number of DL RRH inside the selection region.
\section{Conclusion}
We studied the average sum rate of a C-RAN in which spatially distributed multi-antenna RRHs are used to receive and transmit signals to a full-duplex user. Our analysis considered optimum beamformer design at the UL and DL RRHs as well as suboptimum MRC/MRT and ZF/MRC processing for the SRA scheme. Analytical expressions for the average DL rate of the suboptimum schemes with SRA and ARA schemes were derived, while the UL rate for the SRA scheme was obtained. For a fixed value of LI power, the SRA scheme with optimal and ZF/MRT processing can ensure a balance between maximizing the average sum rate and maintaining an acceptable UL/DL transmission fairness level. The performance of MRC/MRT processing can be substantially improved by optimally tuning the parameter $\phi$.
\begin{figure}[t]
\centering
\includegraphics[width=88mm, height=67mm]{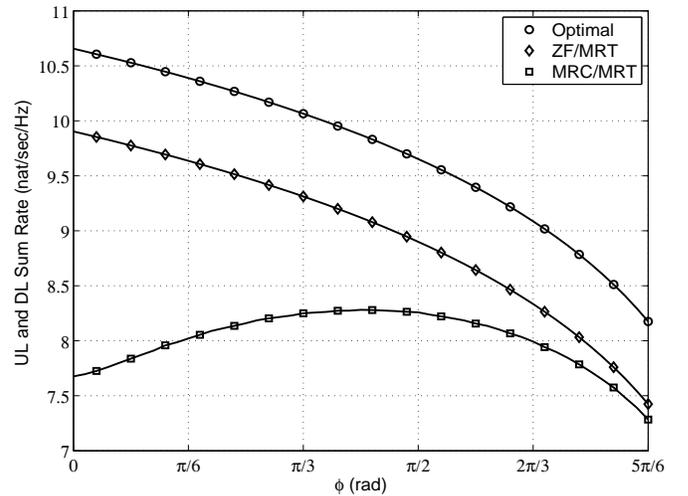}
\caption{ Average sum rate versus $\phi$ with different beamforming designs ($M=2$,  $P_u=10$ dBm, $P_b=10$ dBm, and $\Sap=-30$ dBm).}
\label{fig: sum rate rate}
\end{figure}


\bibliographystyle{IEEEtran}

\end{document}